\newcommand{\supp}{\mathsf{supp}}
\title{Find a witness or shatter:\\ the landscape of computable PAC learning.}
\author{Valentino Delle Rose$^1$ \and Alexander Kozachinskiy$^{123}$ \and Cristobal Rojas$^{12}$ \and Tomasz Steifer$^{1234}$}
\date{%
    $^1$Centro Nacional de Inteligencia Artificial, Chile\\%
    $^2$Instituto de Ingeniería Matemática y Computacional, Universidad Católica de Chile\\
    $^3$Instituto Milenio Fundamentos de los Datos, Chile\\%
    $^4$Institute of Fundamental Technological Research, Polish Academy of Sciences\\%
}
\newtheorem{theorem}{Theorem}
\newtheorem{proposition}{Proposition}
\newtheorem{remark}{Remark}
\newtheorem{definition}{Definition}
\begin{document}

\maketitle

\begin{abstract}%
  This paper contributes to the study of CPAC learnability ---a computable version of PAC learning-- by solving three open questions from recent papers. Firstly, we prove that every improperly CPAC learnable class is contained in a properly CPAC learnable class with polynomial sample complexity. This confirms a conjecture by Agarwal et al (COLT 2021). Secondly, we show that there exists a decidable class of hypotheses which is properly CPAC learnable but only with uncomputably growing sample complexity. This solves a question from Sterkenburg (COLT 2022). Finally, we construct a decidable class of finite Littlestone dimension which is not improperly CPAC learnable, strengthening a recent result of Sterkenburg (2022) and answering a question posed by Hasrati and Ben-David (ALT 2023). Together with previous work, our results provide a complete landscape for the learnability problem in the CPAC setting. 
\end{abstract}

\section{Introduction}

The fundamental problem in the theory of Machine Learning is to understand when a given hypothesis class can be learned by an algorithm that has access to finitely many random samples of an unknown objective function. The goal of the learning algorithm, or the \emph{learner} for short, is to select a function that approximates the objective function almost as well as any hypothesis from the given class. The fundamental theorem of statistical learning asserts that such a learner exists if and only if the VC dimension of the class is finite \citep{vapnik1971uniform, blumer1989learnability}. This characterization is concerned with the existence of learners as abstract mathematical functions and does not take into account their computational properties.

Recently, a new framework combining PAC learning and Computability Theory was proposed by~\cite{agarwal2020learnability}. In computable PAC (CPAC) learning, both the learner and the functions  it outputs are required to be \textit{computable}, in the sense that they can be computed by a Turing Machine. As observed in \cite{agarwal2020learnability}, the existence of a computable learner  no longer  follows from a finite VC dimension. Moreover, the computable setting is sensible to aspects of the problem that make no difference in the classical setting. For example, it becomes essential whether the learner is required to be \textit{proper} (i.e,  constrained to only output functions that belong to the hypothesis class) or allowed to be \textit{improper} (can output arbitrary functions). Another issue is whether the \textit{sample complexity}, i.e.,  the number of samples a learner needs in order to work as requested, can always be bounded by a computable function (a setting referred to as \textit{strong} CPAC learning). This raises a number of natural questions ---which of these aspects of the problem actually lead to different versions of computable learnability?

 Significant progress was made by \cite{sterkenburg2022characterizations}, who gave a characterization of proper strong CPAC learning in terms of the computability of a \textit{Empirical Risk Minimizer} (ERM) and who constructed a class of finite VC dimension which is not CPAC learnable, even in the improper sense. Very recently, \cite{hasrati2023} gave the computability-theoretic perspective on a related framework of \textit{online learning}, further improving our understanding of the learning problem in the computable setting. 

%\color{red}

\paragraph{\textbf{Main results.}} 
The current paper contributes to this line of research, in particular, by solving three open problems raised in these recent papers.
First, we provide a characterization of CPAC learning in the \textit{improper} setting, i.e., when the learner is allowed to output a function outside of the given hypothesis  class. For that, we introduce the \textit{effective VC dimension}. The classical VC-dimension of a hypothesis class $\mathcal{H}$ can be defined as the minimal $k$ such that for any tuple of $k+1$ distinct natural  numbers one can indicate a Boolean function on them which is not realizable by  hypotheses of $\mathcal{H}$. In the effective version of VC dimension,  there also must be an \emph{algorithm}, transforming a tuple of $k+1$ distinct natural numbers into a Boolean function, not realizable on them by $\mathcal{H}$. Our first result states that a hypothesis class is improperly CPAC learnable if and only if  its effective VC dimension is finite. As a byproduct, we obtain that  every improperly CPAC learnable class is in fact a subclass of a properly CPAC learnable class, settling a conjecture formulated by~\cite{agarwal2021open}. Secondly, we show that there exists a decidable class of hypotheses $\mathcal{H}$ that has proper computable learners, but only those whose sample complexity cannot be bounded from above by a computable function. This separates CPAC learning from strong CPAC learning in the \textit{proper} setting, solving a question asked by \cite{sterkenburg2022characterizations}. Finally, we strengthen a theorem of \cite{sterkenburg2022characterizations}, who constructed a decidable class of finite VC dimension which is not improperly CPAC learnable. We show that such a class can be constructed to even have a finite Littlestone dimension, providing an answer to a question posed by \cite{hasrati2023} in the context of online learning.
Altogether, our results provide a comprehensive landscape for the learnability problem in the computable setting.

\color{black}

\paragraph{\textbf{Organization of the paper.}}In Section \ref{sec:prel}, we briefly recall the  classical PAC learning framework. In Section \ref{sec:CPAC}, we provide a detailed overview of  computable PAC learning, go through results and open problems from previous works, and then present precise statements of our results. Proofs are given in the subsequent sections.

\section{Preliminaries}
\label{sec:prel}
\subsection{Notation} For any two sets $A$ and $B$, we denote by $B^A$ the set of functions $f\colon A\to B$. If $f\colon A \to \{0, 1\}$, by the support of $f$, denoted by $\supp(f)$, we mean $f^{-1}(1)$. 

\subsection{Classical PAC learning}

In this section, we briefly introduce the classical PAC learning framework. We only work over the domain $\mathbb{N}$. Thus, a \emph{hypothesis class} $\mathcal{H}$ is an arbitrary subset of  $\{0,1\}^\mathbb{N}$. Elements of $\mathcal{H}$ will be called \emph{hypotheses}.  We will  say that a class is \emph{finitely supported} if it consists of only hypotheses with finite support. A \emph{sample} of size $n$ is an element of  $(\mathbb{N}\times\{0, 1\})^n$. A \emph{learner} is any function $A$ from the set of all samples (that is, from $\bigcup_{n\in\mathbb{N}}(\mathbb{N}\times\{0, 1\})^n$) to $\{0, 1\}^\mathbb{N}$. Now, if $D$ is a probability distribution over $\mathbb{N}\times \{0, 1\}$, then the \emph{generalization error} of $h\in \{0,1\}^\mathbb{N}$ with respect to $D$ is
\[L_D(h) = \Pr_{(x, y)\sim D}[h(x)\neq y].\]
When we write $S\sim D^m$, we mean that the coordinates of the sample $S = ((x_1, y_1), \ldots, (x_m, y_m))$ were drawn independently $m$ times from $D$. In the PAC learning framework, the learner receives a sample $S\sim D^m$ for some sufficiently large $m$. The learner's task is to select, with high probability over $D^m$, some function $f\colon\mathbb{N}\to\{0,1\}$ whose generalization error with respect to $D$ is close to the best possible generalization error achievable by functions in some known, a priori given, hypothesis class $\mathcal{H}$. 

\begin{definition}\label{def:PAC}
Let $\mathcal{H}$ be a hypothesis class and $A$ be a learner. We say that $A$ \textbf{PAC learns} $\mathcal{H}$ if for every $n\in\mathbb{N}$ there exists $m_n$ such that for every $m\ge m_n$ and for every probability distribution $D$  over $\mathbb{N}\times \{0, 1\}$ we have:
\begin{equation}
\label{defi}
    \Pr_{S\sim D^m}[L_D(A(S)) \le \inf_{h\in\mathcal{H}} L_D(h) +1/n]\ge 1 - 1/n.
    \end{equation}
If there exists $A$ which PAC learns $\mathcal{H}$, then $\mathcal{H}$ is called \textbf{PAC learnable}.
    
If $A$ PAC learns $\mathcal{H}$, then the  \textbf{sample complexity} of $A$ w.r.t.~$\mathcal{H}$ is a function $m\colon \mathbb{N}\to\mathbb{N}$, where $m(n)$ is the minimal natural number $m_n$ such that, for all $m\ge m_n$, \eqref{defi} holds for $n$.

We say that $A$ is \textbf{proper} for $\mathcal{H}$ if it only outputs hypotheses from $\mathcal{H}$. Otherwise, we say that $A$ is \textbf{improper} for $\mathcal{H}$.
 \end{definition}

A classical result of learning theory is that PAC learnability admits a combinatorial characterization via the so-called (Vapnik-Chervonenkis) VC dimension. More specifically,
let $x_1, \ldots, x_k$ be $k$ distinct natural numbers.
We say that a hypothesis class $\mathcal{H}$
\emph{shatters} $x_1, \ldots, x_k$ if for all $2^k$ functions $f\colon\{x_1, \ldots, x_k\}\to\{0, 1\}$ there exists $h\in\mathcal{H}$ with $f(x_1) = h(x_1), \ldots, f(x_k) = h(x_k)$. The  \emph{VC dimension} of $\mathcal{H}$ is the maximal natural $k$ for which there exist $k$ distinct natural numbers that are shattered by $\mathcal{H}$. If such $k$ distinct natural numbers exist for all $k$, the VC dimension of $\mathcal{H}$ is $+\infty$. It turns out that a class $\mathcal{H}$ is PAC learnable if and only if its VC dimension is finite.

Another classical result is that any PAC learnable class $\mathcal{H}$ can be learned by a specific type of learners called \emph{Empiricial Risk Minimizers} (or ERM for short). To define them, we first have to define the \emph{error of a hypothesis on a sample}. Namely, let $S = ((x_1, y_1), \ldots, (x_m, y_m))$ be a sample and $h\colon\mathbb{N}\to\{0,1\}$. The error of $h$ on $S$ is defined as:
\[L_S(h) = \frac{\left|\{i\in \{1, \ldots, m\} \mid h(x_i) \neq y_i\}\right|}{m}.\]

A learner $A$ is an \emph{ERM} for a class $\mathcal{H}$ if for every sample $S$ we have $A(S) \in \arg\min_{h\in\mathcal{H}} L_S(h)$. In other words, for a given sample $S$ an ERM outputs a hypothesis from $\mathcal{H}$ with the least error on $S$. Note that ERM might be not unique (there might be more than one hypothesis attaining the minimum of $L_S(h)$). Finally, there is another relevant property of a class $\mathcal{H}$ that we will also use. We say that a class of hypothesis $\mathcal{H}$ has the \emph{uniform convergence property} if for every $n$ there exists $m_n$ such that for every $m \ge m_n$ and for every probability distribution $D$, 
    \begin{equation}
    \label{UCP}
    \Pr_{S\sim D^m} \left[ \forall h \in H, \ |L_D(h) - L_S(h)| \le \frac{1}{n} \right] \ge 1 - 1/n.
    \end{equation}

The fundamental theorem of statistical learning links together VC dimension, learnability, Empirical Risk Minimization, and the uniform convergence property.

\begin{theorem}[\citet{vapnik1971uniform}, \cite{blumer1989learnability}]\footnote{The definition of VC dimension and its connection with the uniform convergence property were established in \cite{vapnik1971uniform}, while the relationship with PAC learnability is due to \cite{blumer1989learnability}. See e.g.~\cite{MLbook2014}, Theorem 6.7 for an explanatory presentation.}\label{thm:fundamental}

For any class of hypotheses $\mathcal{H}$, the following conditions are equivalent:
\begin{itemize}
 \item $\mathcal{H}$ is PAC learnable,
\item the VC dimension of $\mathcal{H}$ is finite;
    \item every ERM for $\mathcal{H}$ PAC learns $\mathcal{H}$, and its sample complexity w.r.t.~$\mathcal{H}$ is bounded by $O(n^2\log n)$, where the constant hidden in $O(\cdot)$ depends only on $\mathcal{H}$;
    \item $\mathcal{H}$ has the uniform convergence property.
\end{itemize}
\end{theorem}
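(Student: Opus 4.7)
The plan is to prove the cycle of implications $(4) \Rightarrow (3) \Rightarrow (1) \Rightarrow (2) \Rightarrow (4)$, with the final step carrying the bulk of the technical work.

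For $(4) \Rightarrow (3)$, I would fix any ERM $A$ and target accuracy $1/n$. Uniform convergence at accuracy $1/(2n)$ and failure probability $1/n$ supplies some $m_n$ such that, with probability at least $1 - 1/n$ over $S \sim D^m$ with $m \geq m_n$, every $h \in \mathcal{H}$ satisfies $|L_S(h) - L_D(h)| \leq 1/(2n)$. On that event, chain $L_D(A(S)) \leq L_S(A(S)) + 1/(2n) \leq L_S(h^*) + 1/(2n) \leq L_D(h^*) + 1/n$ for any $h^* \in \mathcal{H}$, using that $A$ minimizes empirical risk; then take the infimum over $h^*$. The quantitative $O(n^2 \log n)$ bound is inherited from the uniform-convergence threshold produced in $(2) \Rightarrow (4)$. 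The implication $(3) \Rightarrow (1)$ is immediate since an ERM (as an abstract function) always exists.

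For $(1) \Rightarrow (2)$, I would argue the contrapositive by a no-free-lunch argument. Assuming the VC dimension is infinite, for every $m$ fix a shattered set $X$ of size $2m$, and for each labeling $f\colon X \to \{0,1\}$ pick a witness $h_f \in \mathcal{H}$ with $h_f|_X = f$; consider the uniform distribution $D_f$ on $X \times \{0,1\}$ realizing $f$, which achieves $\inf_{h\in\mathcal{H}} L_{D_f}(h) = 0$. Averaging over a uniformly random $f$ and using that a learner seeing $m$ points has no information on at least half of $X$, one shows the expected error of any learner on an unseen point is $1/2$, giving a total expected error $\geq 1/4$, which prevents PAC learning for small enough $1/n$.

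The main obstacle is $(2) \Rightarrow (4)$. Assuming $\mathcal{H}$ has VC dimension $d$, the first ingredient is the Sauer--Shelah lemma bounding the number of labelings of any $m$-point set realized by $\mathcal{H}$ by $\sum_{i=0}^{d} \binom{m}{i} = O(m^d)$. The second is a symmetrization (double-sample) argument: drawing an independent ghost sample $S'$, one bounds
\[
\Pr_{S \sim D^m}\!\left[\sup_{h\in\mathcal{H}} |L_S(h) - L_D(h)| > \epsilon\right] \le 2\,\Pr_{S, S' \sim D^m}\!\left[\sup_{h\in\mathcal{H}} |L_S(h) - L_{S'}(h)| > \epsilon/2\right].
\]
Conditioning on the multiset $S \cup S'$ and using that a uniformly random partition of its $2m$ points into two halves of size $m$ is equidistributed with $(S, S')$, the supremum over $\mathcal{H}$ collapses to a maximum over only $O(m^d)$ effective labelings. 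A union bound combined with Hoeffding's inequality on each fixed labeling yields the uniform-convergence failure probability; solving for the threshold gives $m_n = O(n^2 \log n)$, establishing both uniform convergence and the sample-complexity bound required in $(3)$. The hardest conceptual point is the symmetrization step, which is what lets one replace the uncountable supremum over $\mathcal{H}$ by a finite maximum controlled by the growth function.
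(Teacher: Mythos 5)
This statement is the classical fundamental theorem of statistical learning; the paper does not prove it but cites it (pointing to Theorem 6.7 of the Shalev-Shwartz--Ben-David textbook), so there is no in-paper argument to compare against. Your outline is the standard and correct proof of that cited result: uniform convergence implies ERM succeeds, a no-free-lunch argument for PAC learnability implying finite VC dimension, and Sauer--Shelah plus symmetrization and Hoeffding for the $O(n^2\log n)$ uniform-convergence bound; the only cosmetic points to tidy in a full write-up are that the $\arg\min$ defining an ERM is attained because $L_S$ takes finitely many values, and that the infimum over $h^*$ should be handled by an approximate minimizer since it need not be attained in $\mathcal{H}$.
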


Similarly to VC dimension for PAC learnability, there is a combinatorial property that characterizes \emph{online learnability} (see \cite{littlestone1988} and \cite{bendavid2009}) and private PAC learning \citep{alon2022private}. Consider a full rooted binary tree $T$ of depth $d$ where   each non-leaf node is labeled by some $x \in \mathbb{N}$.
We say that $T$ is \emph{shattered} by a hypothesis class $\mathcal{H}$ if for every leaf $l$ of $T$ there exists a hypothesis $h\in\mathcal{H}$ which \emph{leads to $l$ in $T$}. I.e., $l$ can be obtained by descending from the root of $T$ in the following manner: if we are in some non-leaf node $v$ (in the beginning, $v$ is the root), and if $x\in\mathbb{N}$ is the label of $v$ in $T$, then we go to the left child of $v$ if $h(x) = 0$   and we go to the right child of $v$ if $h(x) = 1$, and this continues until we reach a leaf.  The \emph{Littlestone dimension of} $\mathcal{H}$ (denoted by $\mathrm{Ldim(\mathcal{H})}$) is defined as the maximal depth of a tree which is shattered by $\mathcal{H}$. If $\mathcal{H}$ shatters trees of arbitrary depth, then we set $\mathrm{Ldim(\mathcal{H})}= \infty$. It is not hard to see that for every hypotheses class $\mathcal{H}$, the VC dimension of $\mathcal{H}$ is at most $\mathrm{Ldim(H)}$ (see e.g.~\cite{MLbook2014}, Theorem 21.9).

\section{Computable PAC learning: previous results and our contribution}
\label{sec:CPAC}

In the computable version of PAC learning (introduced by~\cite{agarwal2020learnability}) one considers only \emph{computable} learners, i.e. learners that can be implemented by an algorithm in the strict sense of the word (a Turing machine). 
In more detail, we say that a learner $A$ is computable if there exists an algorithm which, given a sample $S$, outputs a description of a computer program (formally, a Turing machine), implementing $h=A(S)$ (the function that $A$ outputs on $S$). In particular, with this program, we can evaluate $h$ on any natural number.  
In this paper, we mainly deal with learners that only output finitely supported hypotheses. To show that such a learner $A$ is computable, it is enough to exhibit an algorithm that transforms $S$ into $\supp(A(S))$ -- since $\supp(A(S))$ is finite, one can easily construct a Turing machine that outputs 1 exactly on $\supp(A(S))$. Requiring a learner to be computable naturally leads to the following definition: 

\begin{definition}
\label{def:cpac}
    A  hypothesis class $\mathcal{H}$ is \textbf{computably PAC learnable} (or \textbf{CPAC learnable} for short) if there exists a computable learner\footnote{We stress that in this definition the learner might be \emph{improper}, in the sense that it is allowed to output functions that do not necessarily belong to $\mathcal{H}$.} that PAC learns it.
 \end{definition}

\begin{remark} Previous authors make an explicit assumption for CPAC learnability that a hypothesis class must contain only  computable functions. In this paper, we drop this assumption because it is never used in the proofs and we consider it unnecessarily restrictive. Nevertheless, all our results hold with this assumption as well. Moreover, all examples of classes in this paper are computationally simple. Namely, they consist  only of functions with finite support. In fact, all these classes are decidable, meaning that there is an algorithm that, when provided the support of  a given finitely supported function $h$,
 decides, whether $h$ belongs to the class.  
 \end{remark}

The question of whether or not CPAC learnability is different from PAC learnability is not obvious and was left as an open problem in~\cite{agarwal2020learnability}. It was recently solved by~\cite{sterkenburg2022characterizations}, who gave an example of a hypothesis class with VC-dimension 1 which is not CPAC learnable. In fact, his class  consists of hypotheses with finite support and is decidable.

Agarwal et al.~also introduced a more restrictive version of CPAC learnability by constraining learners to only output functions from the given hypothesis class $\mathcal{H}$.

\begin{definition}
\label{def:prop-cpac}
    A  hypothesis class $\mathcal{H}$ is \textbf{properly CPAC learnable}  if there exists a computable learner that PAC learns $\mathcal{H}$ and that is proper for $\mathcal{H}$. 
 \end{definition}

 \begin{remark} The reader should be warned that the choice of names for these definitions is not consistent across the literature. Here we have followed ~\cite{agarwal2020learnability}. But for example, \cite{sterkenburg2022characterizations} calls CPAC learnability what we have called here proper CPAC learnability.  
 \end{remark}

In the classical setting, the requirement of being proper does not change anything. Indeed, by Theorem \ref{thm:fundamental}, whenever some learner PAC learns $\mathcal{H}$, we have that any ERM for $\mathcal{H}$
PAC learns it as well, and any ERM is proper by definition. However, in the computable setting, the existence of a computable learner does not necessarily guarantee the computability of some ERM, or of any other proper learner. In fact, as shown by \cite{agarwal2020learnability}, the set of properly CPAC learnable classes is indeed strictly contained in the set of CPAC learnable ones.  Their example consists of a certain decidable class of hypotheses $\mathcal{H}$ made of functions with the support of size 2 (and thus PAC learnable since its VC dimension is at most 2) which does not admit any proper computable learner, but for which an (improper) computable learner can be easily constructed. Even better, in this case, the class $\mathcal{H}$ is included in the decidable class $\mathcal{H}_2$ of all functions with size-2 support, which clearly admits a computable ERM and is therefore properly CPAC learnable. A question then naturally arises: 
\begin{center}\emph{is every CPAC learnable class $\mathcal{H}$ contained in some properly CPAC learnable class $\widehat{\mathcal{H}}$?}
\end{center}
Observe that indeed all subclasses of a properly CPAC learnable $\widehat{\mathcal{H}}$ are (improperly) CPAC learnable (via the same computable learner that PAC learns $\widehat{\mathcal{H}}$). However, it is not clear if this is the only way a class $\mathcal{H}$ can be CPAC learnable. This question was raised by both \cite{sterkenburg2022characterizations} and \cite{agarwal2021open}. In this paper, we shall answer it in the affirmative (see Theorem \ref{imp_char} below).

A similar analysis can be carried out with a focus on a different aspect of the problem --sample complexity. As we have just mentioned, in the classical setting, a class $\mathcal{H}$ that is PAC learnable, is automatically learnable by any ERM, and the sample complexity of ERMs is only polynomial in $n$. Once again, in the computable setting, ERMs are not necessarily computable and therefore, for a given computable learner, even the computability of its sample complexity function is not guaranteed. Motivated by this issue, Sterkenburg introduced the following \emph{strong} variant of CPAC learnability.

\begin{definition}
\label{def:scpac}
    A  hypothesis class $\mathcal{H}$ is \textbf{strongly CPAC learnable} (or SCPAC learnable for short) if there exists a computable learner that PAC learns $\mathcal{H}$ and whose sample complexity can be bounded by some total computable function.  Moreover, if there exists a learner with these properties which is proper for $\mathcal{H}$, then $\mathcal{H}$ is \textbf{properly} SCPAC learnable.
 \end{definition}
 Sterkenburg obtained a characterization of \emph{proper} SCPAC learnability.
 \begin{theorem}[\cite{sterkenburg2022characterizations}, Theorem 2] \label{thm:proper-SCPAC}
    A class of hypotheses $\mathcal{H}$ is properly SCPAC learnable if and only if  it has a computable ERM and its VC dimension is finite. 
\end{theorem}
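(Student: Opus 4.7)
The plan is to prove both directions. The backward direction is an immediate consequence of Theorem~\ref{thm:fundamental}: if $\mathcal{H}$ has finite VC dimension and admits a computable ERM $A$, then by the fundamental theorem $A$ already PAC learns $\mathcal{H}$ with sample complexity bounded by $O(n^2\log n)$, a total computable function, and $A$ is proper by the definition of ERM. Hence $\mathcal{H}$ is properly SCPAC learnable.

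For the forward direction, suppose $\mathcal{H}$ is properly SCPAC learnable via a computable, proper learner $A$ with a computable sample complexity bound $m\colon\mathbb{N}\to\mathbb{N}$. Finite VC dimension is forced by Theorem~\ref{thm:fundamental} since $\mathcal{H}$ is in particular PAC learnable. The real work is producing a computable ERM from $A$ and $m$. Given an input sample $S = ((x_1,y_1),\ldots,(x_k,y_k))$ of size $k\ge 1$ (the case $k=0$ is trivial: output any fixed $h\in\mathcal{H}$), consider the empirical distribution $D$ on $\mathbb{N}\times\{0,1\}$ that assigns mass $1/k$ to each pair $(x_i,y_i)$. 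A direct computation shows $L_D(h)=L_S(h)$ for every hypothesis $h$.

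Set $n = k+1$. Exhaustively enumerate all $k^{m(n)}$ index tuples $(i_1,\ldots,i_{m(n)})\in\{1,\ldots,k\}^{m(n)}$, each corresponding to a possible draw $S'\sim D^{m(n)}$. For each such $S'$, compute a description of the hypothesis $A(S')\in\mathcal{H}$ (possible because $A$ is computable and proper) and evaluate $L_S(A(S'))$ by running $A(S')$ on $x_1,\ldots,x_k$. Return the $A(S')$ achieving the smallest observed value of $L_S$. The PAC guarantee applied at parameter $n$ yields
\[\Pr_{S'\sim D^{m(n)}}\left[L_D(A(S')) \le \inf_{h\in\mathcal{H}} L_D(h) + \tfrac{1}{n}\right] \ge 1 - \tfrac{1}{n} > 0,\]
so at least one enumerated $S'$ produces some $h = A(S')$ with $L_S(h) \le \min_{h'\in\mathcal{H}} L_S(h') + 1/(k+1)$. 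Since $L_S$ only takes values in $\{0,1/k,\ldots,1\}$ and $1/(k+1) < 1/k$, this inequality forces $L_S(h) = \min_{h'\in\mathcal{H}} L_S(h')$. Hence the returned hypothesis is an ERM, and the whole procedure is patently computable.

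The main obstacle is this derandomization step. Two ideas combine to make it work: picking $D$ to be the empirical distribution, so that generalization error on $D$ coincides exactly with empirical error on $S$; and exploiting the discreteness of $L_S$ (values on a $1/k$-spaced grid) to promote an approximate minimizer to an exact ERM. The computability of the sample complexity bound $m(n)$ is what guarantees the enumeration terminates, and properness is what ensures each candidate $A(S')$ lies in $\mathcal{H}$ and can thus serve as an ERM output.
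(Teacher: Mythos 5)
Your argument is correct. Note that the paper itself does not prove this statement --- it is imported from \cite{sterkenburg2022characterizations} (Theorem 2) --- so there is no in-paper proof to compare against; but your construction is essentially the same derandomization device the paper deploys in its proof of Proposition \ref{cpac_char}: replace the unknown distribution by the empirical distribution $D(S)$, observe $L_{D(S)} = L_S$, and exhaustively enumerate all possible draws of the required size, which is finite and effective precisely because the sample-complexity bound $m(n)$ is computable. The one ingredient you add beyond what Proposition \ref{cpac_char} needs is the observation that $L_S$ lives on the $1/k$-grid, so an additive error below $1/k$ (you take $1/(k+1)$) upgrades the approximate minimizer to an exact one; this is exactly what converts the ``asymptotic ERM'' of the CPAC setting into a genuine ERM in the SCPAC setting, and it is where the computability of the sample complexity is indispensable (without it, one cannot choose $n$ as a function of $|S|$ and still terminate). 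The backward direction via Theorem \ref{thm:fundamental} is also fine: the $O(n^2\log n)$ bound, for the fixed constant depending on $\mathcal{H}$, is a total computable function, and existence of such a bound is all the definition of SCPAC requires.
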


One interesting consequence of this result is that whenever $\mathcal{H}$ has a proper computable learner whose sample complexity is bounded by some total computable function, no matter how fast it grows, $\mathcal{H}$ automatically has some other computable learner whose sample complexity is just polynomial (because by Theorem \ref{thm:fundamental}, every ERM has polynomial sample complexity). As a main problem, Sterkenburg left open the following natural question:

\begin{center}\emph{is proper CPAC learnability equivalent to proper SCPAC learnability?}
\end{center}
 In this paper, we solve this problem by showing that these two notions can be separated by a decidable class of hypotheses, see Theorem \ref{separation} below.

\subsection{Statements of our results.}

%\color{red}
We start with a characterization of CPAC learnability via \emph{effecitve VC-dimension}. 

\begin{definition}
    Let $\mathcal{H}\subseteq\{0, 1\}^\mathbb{N}$ be a hypothesis class.
    \textbf{A $k$-witness of VC dimension for $\mathcal{H}$} is any function $w$, whose domain is the set of all increasing $(k+1)$-tuples of natural numbers and whose range is $\{0, 1\}^{k+1}$, such that for every $x_1 < x_2 < \ldots < x_{k+1}\in\mathbb{N}$ and for every $h\in \mathcal{H}$ we have:
    \[\big(h(x_1), \ldots, h(x_{k+1})\big)\neq w((x_1, \ldots, x_{k +1})).\]

    The \textbf{effective VC-dimension} of $\mathcal{H}$ is the minimal $k\in\mathbb{N}$ such that there exists a \textbf{computable} $k$-witness of VC-dimension for $\mathcal{H}$. If no such $k$ exists, then the effective VC-dimension of $\mathcal{H}$ is infinite.
    
\end{definition}

In other words, for every  $x_1 < x_2 < \ldots < x_{k+1}$, the witness function $w$ provides a Boolean function on $x_1, \ldots, x_{k+1}$ which is not realizable by hypotheses from $\mathcal{H}$. The minimal $k$ for which such $w$ exists (possibly, not computable) is equal to the ordinary VC dimension of $\mathcal{H}$. Now, if we consider only computable witnesses, we obtain the effective VC dimension. Thus, the effective VC dimension can only be larger than the ordinary one.

Sterkenburg proved that being CPAC learnable implies having finite effective VC dimension (and used this to give an example of a class that is PAC learnable but not CPAC learnable):

\begin{proposition}[\cite{sterkenburg2022characterizations}, Lemma 1]\label{k-wit}
    If $\mathcal{H}$ is CPAC learnable, then it admits a computable $k$-witness of VC dimension, for some natural number $k$.
\end{proposition}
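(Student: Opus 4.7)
The plan is to turn the computable PAC learner $A$ into a computable statistical test that distinguishes realizable from non-realizable labelings on $k+1$ points, for a sufficiently large constant $k$ depending only on $A$'s sample complexity. Fix the accuracy parameter $1/3$ with confidence $2/3$: since $A$ PAC-learns $\mathcal H$, there is a fixed natural number $m_0$ such that for every distribution $D$ on $\mathbb N\times\{0,1\}$, $\Pr_{S\sim D^{m_0}}[L_D(A(S))\le \inf_{h\in\mathcal H} L_D(h)+1/3]\ge 2/3$. The number $m_0$ is hard-coded into the witness algorithm; no procedure for computing $m_0$ from $A$ is needed.

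Given inputs $x_1<\dots<x_{k+1}$ and a labeling $y\in\{0,1\}^{k+1}$, let $D_y$ be the uniform distribution on the $k+1$ labeled points $(x_i,y_i)$, and set
\[ q_y \;=\; \Pr_{S\sim D_y^{m_0}}\bigl[L_{D_y}(A(S))\le 1/3\bigr]. \]
Because $D_y$ has finite support, $D_y^{m_0}$ is a finite probability space of size $(k+1)^{m_0}$ and $A$ is computable, so $q_y$ is a rational that can be evaluated exactly by enumerating all samples. The witness outputs the lexicographically first $y$ with $q_y<2/3$. Correctness follows because if $y$ is realized by some $h\in\mathcal H$, then $\inf_{h'}L_{D_y}(h')=0$, and the PAC guarantee forces $q_y\ge 2/3$; hence every output $y$ must be non-realizable.

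The main obstacle is to prove, for $k+1$ sufficiently large, that some $y$ with $q_y<2/3$ exists. The naive identity $\sum_y q_y=E_S[|\{y:L_{D_y}(A(S))\le 1/3\}|]$ fails because the distribution of $S$ depends on $y$. My plan is to randomize $y$ itself: let $y$ be uniform on $\{0,1\}^{k+1}$ and decompose the joint sampling as (i) pick the multi-index $(i_1,\dots,i_{m_0})\in[k+1]^{m_0}$ uniformly, with $I=\{i_1,\dots,i_{m_0}\}$; (ii) the sample $S$ reveals only $y_I$, while $y_{I^c}$ remains independent and uniform. Conditioned on the multi-index and on $y_I$, the hypothesis $A(S)$ is determined, and $L_{D_y}(A(S))\le 1/3$ requires $y$ to lie within Hamming distance $(k+1)/3$ of $A(S)$; the probability of this event over the independent uniform $y_{I^c}$ is controlled by the size of a Hamming ball on $\{0,1\}^{|I^c|}$ divided by $2^{|I^c|}$. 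Using $|I^c|\ge k+1-m_0$ together with the standard entropy bound $\sum_{j\le s/3}\binom{s}{j}\le 2^{H(1/3)\,s}$ (applicable once $k+1\ge 3m_0$) gives
\[ \sum_y q_y \;\le\; 2^{k+1}\cdot 2^{-(1-H(1/3))(k+1-m_0)}. \]
Choosing $k+1$ to be a sufficiently large multiple of $m_0$ drives the right-hand side strictly below $\tfrac{2}{3}\cdot 2^{k+1}$, so at least one $q_y$ must be below $2/3$, supplying the required non-realizable labeling and completing the construction of a computable $k$-witness.
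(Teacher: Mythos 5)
Your proof is correct, and it takes essentially the same route as the cited result (Sterkenburg's Lemma 1): an effective no-free-lunch argument in which one exactly simulates the computable learner on all $(k+1)^{m_0}$ samples from the uniform distribution over the given points, uses the PAC guarantee to conclude that any labeling on which the learner fails cannot be realizable by $\mathcal{H}$, and averages over labelings---exploiting that the unseen coordinates $y_{I^c}$ remain uniform---to show such a labeling exists once $k+1$ is a large enough multiple of $m_0$. The only blemish is a harmless constant-level imprecision in the entropy bound (the Hamming ball has radius $(k+1)/3$ rather than a third of its ambient dimension $|I^c|$), which your choice of a sufficiently large multiple of $m_0$ absorbs.
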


We show that these two conditions are, in fact, equivalent.
Moreover, they are equivalent to being contained in a properly SCPAC learnable class. Thus, our next theorem settles a question of \cite{sterkenburg2022characterizations,agarwal2021open}.

\begin{theorem}
\label{imp_char}
For every $\mathcal{H}$, the following 3 conditions are equivalent:
\begin{itemize}
    \item (a) Effective VC dimension of $\mathcal{H}$ is finite;
    \item (b) $\mathcal{H}$ is  CPAC learnable;
    \item (c) $\mathcal{H}$ is contained in some properly SCPAC learnable hypothesis class $\widehat{\mathcal{H}}$.
\end{itemize}
\end{theorem}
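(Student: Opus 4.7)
The plan is to prove the cycle $(c) \Rightarrow (b) \Rightarrow (a) \Rightarrow (c)$. The implications $(c) \Rightarrow (b)$ and $(b) \Rightarrow (a)$ are short: for the first, if $\widehat A$ is a learner that properly SCPAC-learns $\widehat{\mathcal H}$, then because $\mathcal H \subseteq \widehat{\mathcal H}$ gives $\inf_{h \in \mathcal H} L_D(h) \ge \inf_{h \in \widehat{\mathcal H}} L_D(h)$ for every distribution $D$, the same learner (possibly improperly) PAC-learns $\mathcal H$; the second is exactly Proposition~\ref{k-wit}. The substance is in $(a)\Rightarrow(c)$.

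For $(a) \Rightarrow (c)$, given a computable $k$-witness $w$ of VC dimension for $\mathcal H$, I would define
\[
\widehat{\mathcal H} = \bigl\{h \in \{0,1\}^{\mathbb N} : \bigl(h(x_1), \ldots, h(x_{k+1})\bigr) \neq w(x_1, \ldots, x_{k+1})\ \text{for every}\ x_1 < \ldots < x_{k+1}\bigr\}.
\]
By the definition of a witness, $\mathcal H \subseteq \widehat{\mathcal H}$, and the VC dimension of $\widehat{\mathcal H}$ is at most $k$: if $\widehat{\mathcal H}$ shattered a tuple $x_1 < \ldots < x_{k+1}$ it would in particular realize the forbidden labeling $w(x_1, \ldots, x_{k+1})$, contradicting the definition of $\widehat{\mathcal H}$. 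By Theorem~\ref{thm:proper-SCPAC} it then suffices to exhibit a computable ERM for $\widehat{\mathcal H}$.

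The ERM, on a sample $S$ with underlying points $X = \{x_1 < \ldots < x_m\}$, will enumerate the $2^m$ labelings $y$ of $X$ in order of non-decreasing empirical error $L_S(y)$. It first discards any $y$ that is not locally $w$-consistent on $X$ (a decidable test applying $w$ to every $(k{+}1)$-subset of $X$). For the surviving labelings it then needs to find one that is actually realized by some $h \in \widehat{\mathcal H}$ and to produce a description of a Turing machine computing such an $h$ with $h|_X = y$. The extension is built incrementally: for each $z \notin X$ in increasing order, we try $h(z)=0$ first and check consistency with $w$ on all $(k{+}1)$-tuples involving $z$ and previously committed values; if that fails we try $h(z)=1$; if both fail we backtrack to an earlier choice. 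Because $w$ is computable, each local test is decidable.

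The hard part, and the main technical obstacle, is to guarantee that this extension procedure always halts with a well-defined total function for the correct candidate $y$. Determining whether a locally consistent $y$ extends to a globally $w$-consistent function is a $\Pi_1$ property of $(X,y)$, and the $\Pi_1^0$ class of $w$-consistent extensions may in principle contain no computable point. The plan is to bypass this by dovetailing the search over candidate labelings with their tentative extensions: we run all consistent $y$'s in parallel, letting each extension search run for progressively larger finite horizons, and we exploit the Sauer--Shelah bound (since $\widehat{\mathcal H}$ has VC dimension $\le k$, only $O(m^k)$ labelings of $X$ are ever realized by $\widehat{\mathcal H}$) to argue that the minimum-error realizable labeling is found after finitely many stages. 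The fact that $\mathcal H \subseteq \widehat{\mathcal H}$ is nonempty and $w$-consistent ensures that the dovetailed tree of consistent extensions is infinite, so by a König-style compactness argument combined with the computability of $w$, one can commit to an output in finite time whose extension is guaranteed to remain $w$-consistent on the whole of $\mathbb N$. Making this precise—describing the dovetailing schedule and proving that it outputs a total computable $h \in \widehat{\mathcal H}$ achieving the empirical minimum—is where the technical weight of $(a) \Rightarrow (c)$ lies.
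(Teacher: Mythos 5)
Your cycle $(c)\Rightarrow(b)\Rightarrow(a)\Rightarrow(c)$ matches the paper's decomposition, and the first two implications are fine. The problem is the key step $(a)\Rightarrow(c)$, where you take $\widehat{\mathcal H}$ to be the class of \emph{all} globally $w$-consistent functions. The containment $\mathcal H\subseteq\widehat{\mathcal H}$ and the VC bound $k$ are correct, but the computable ERM is exactly where your argument has a genuine gap, and you essentially say so yourself. The issue is not one of bookkeeping: the set of total $w$-consistent extensions of a finite labeling is a $\Pi^0_1$ class, and a nonempty $\Pi^0_1$ class need not contain any computable member. König's lemma tells you an infinite path through the tree of consistent finite extensions exists, but it does not let you \emph{compute} one, and a proper learner must output (a Turing machine for) an actual element of $\widehat{\mathcal H}$. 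Dovetailing and Sauer--Shelah do not repair this: deciding whether a labeling of $X$ is realized by some member of $\widehat{\mathcal H}$ is itself only co-c.e.\ (it is a "for all $N$ there is a consistent extension up to $N$" statement), so you can neither identify the minimum-error realizable labeling in finite time nor, having guessed it, commit to a computable total extension. Note also that extending by $0$ beyond $\max X$ need not preserve consistency, since $w$ may forbid patterns on tuples straddling $\max X$.

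The paper sidesteps all of this with one modification to the definition of $\widehat{\mathcal H}$: it adjoins to $\mathcal H$ only the \emph{finitely supported} functions $h$ that are required to disagree with $w$ only on tuples $x_1<\dots<x_{k+1}$ with $x_{k+1}<\max\supp(h)$. Tuples reaching at or beyond the top of the support are exempt, so goodness of $h$ is checkable by a finite computation, and for a sample with maximum point $M$ one gets an ERM by brute force over the (effectively listable) good hypotheses supported in $\{0,\dots,M\}$; truncating any $h\in\widehat{\mathcal H}$ at $M$ yields a good hypothesis with the same empirical error. The price is that the VC dimension bound degrades from $k$ to $k+1$ (the forbidden pattern becomes $(w(x_1,\dots,x_{k+1}),1)$ on $(k+2)$-tuples, using the extra coordinate to certify $x_{k+2}\le\max\supp(h)$), which is harmless. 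You would need to replace your compactness argument with a device of this kind for the proof to go through.
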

\begin{proof}
    Section \ref{sec:imp}.
\end{proof}

\color{black}

We note that the previous result also shows that  CPAC learnability is equivalent to SCPAC learnability, as well as to being contained in some properly CPAC learnable class (as the latter two properties are stronger than condition $(b)$ but weaker than condition $(c)$ of Theorem \ref{imp_char}).

Next, we observe that one can give a characterization of proper CPAC learnability which is similar in spirit to Proposition \ref{thm:proper-SCPAC}, Sterkenburg's characterization of proper SCPAC learnability. For that, we need the following relaxed version of ERMs.
\begin{definition}
    Let $\mathcal{H}\subseteq \{0,1\}^\mathbb{N}$ be an hypothesis class. A learner $A$ is called an \textbf{asymptotic ERM} for $\mathcal{H}$ if it outputs only hypotheses from $\mathcal{H}$ and if there exists  an infinite sequence $\{\varepsilon_m\in [0, 1]\}_{m = 1}^\infty$, converging to $0$ as $m\to\infty$, such that for every sample $S$ we have that:
\[L_S(A(S)) \le \inf_H L_S(h) + \varepsilon_{|S|}.\]
\end{definition}

Just like the existence of a computable ERM characterizes proper SCPAC learnability, proper CPAC learnability boils down to the existence of a computable asymptotic ERM.
\begin{proposition}
\label{cpac_char}
A hypothesis class $\mathcal{H}\subseteq \{0,1\}^\mathbb{N}$ is properly CPAC learnable if and only if its VC dimension is finite and
it has a computable asymptotic ERM.
\end{proposition}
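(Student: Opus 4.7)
I would prove the two directions separately; the forward implication is the substantive one.

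For the "if" direction, my plan is to show that any computable asymptotic ERM $A$ for $\mathcal{H}$ is already a proper CPAC learner when the VC dimension is finite. By Theorem~\ref{thm:fundamental}, finite VC dimension gives the uniform convergence property, so for each $n$ there is an $m$-threshold beyond which $|L_S(h) - L_D(h)| \le 1/(3n)$ holds for every $h \in \mathcal{H}$ with probability at least $1 - 1/n$, uniformly over any distribution $D$. Choosing $m$ additionally large enough that $\varepsilon_m \le 1/(3n)$ (possible since $\varepsilon_m \to 0$), on this high-probability event
\[ L_D(A(S)) \le L_S(A(S)) + \tfrac{1}{3n} \le \inf_{h \in \mathcal{H}} L_S(h) + \varepsilon_m + \tfrac{1}{3n} \le \inf_{h \in \mathcal{H}} L_D(h) + \tfrac{1}{n}, \]
using uniform convergence on the outer inequalities and the asymptotic ERM property in the middle. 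Properness of $A$ is given.

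For the "only if" direction, suppose a proper computable PAC learner $A$ is available, with sample complexity $m_n$ that may be uncomputable. Finite VC dimension is immediate from Theorem~\ref{thm:fundamental}. To build a computable asymptotic ERM $A'$, my plan is to feed $A$ samples drawn from the empirical distribution of the input and derandomize by exhaustive enumeration. Given $S = ((x_1, y_1), \dots, (x_m, y_m))$, let $D_S$ be the uniform distribution on the entries of $S$, so $L_{D_S}(h) = L_S(h)$ for every $h$. For any $k \ge m_n$, applying the PAC guarantee of $A$ to $D_S$ yields $L_S(A(S')) \le \inf_{h \in \mathcal{H}} L_S(h) + 1/n$ with positive probability over $S' \sim D_S^k$, so such an $S'$ must exist among the $m^k$ length-$k$ sequences of entries from $S$. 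I would then fix any computable $k(m) \to \infty$ (for instance $k(m) = \lfloor \log m \rfloor$) and define $A'(S)$ by enumerating all these $S'$, running $A(S')$ on each, computing $L_S(A(S'))$, and returning an $A(S')$ attaining the minimum. Then $A'$ is proper since $A$ is, and $A'$ is computable: it runs the computable $A$ on finitely many inputs and evaluates the resulting total computable hypotheses on the finitely many points of $S$. Setting $\varepsilon_m = \inf\{1/n : k(m) \ge m_n\}$, the argument above gives $L_S(A'(S)) \le \inf_{h \in \mathcal{H}} L_S(h) + \varepsilon_m$, and $\varepsilon_m \to 0$ because every threshold $m_n$ is eventually exceeded by $k(m)$.

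The main obstacle, which this derandomization is designed to sidestep, is that the sample complexity $m_n$ of $A$ may be uncomputable and thus cannot be used as an input parameter inside $A'$. The crucial observation is that $A'$ does not need to know $m_n$: any computable $k(m) \to \infty$ eventually dominates every fixed threshold, which suffices to turn the probabilistic PAC guarantee of $A$ into the uniform worst-case empirical bound demanded by the definition of an asymptotic ERM.
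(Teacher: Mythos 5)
Your proposal is correct and follows essentially the same route as the paper in both directions: uniform convergence plus the vanishing $\varepsilon_m$ for the "if" part, and derandomizing the given proper learner over all samples drawn from the empirical distribution of $S$ for the "only if" part. The only cosmetic difference is that the paper resamples at size $|S|$ itself rather than a separate computable $k(m)\to\infty$; both choices work for the same reason.
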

\begin{proof}
    Section \ref{sec:asy}.
\end{proof}

From the proof of Proposition \ref{cpac_char} it is not hard to see that the error $\varepsilon_m$ is to an asymptotic ERM for $\mathcal{H}$ exactly as the sample complexity function $m(n)$ is to a corresponding proper learner. It follows that $\varepsilon_m$ can be bounded above by a computable function that decreases to $0$, exactly when $\mathcal{H}$ is properly SCPAC learnable. We use this observation to answer another open question of Sterkenburg. Namely, that not every properly CPAC learnable class is properly SCPAC learnable.

\begin{theorem}\label{separation}
\label{example}
There is a decidable class of finitely supported hypothesis $\mathcal{H}$ which is properly CPAC learnable but not properly SCPAC learnable.
\end{theorem}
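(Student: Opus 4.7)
My plan is to use Proposition~\ref{cpac_char} and Theorem~\ref{thm:proper-SCPAC} to reduce the statement to a concrete computability separation: construct a decidable, finitely supported class $\mathcal{H}$ of finite VC dimension that admits a computable asymptotic ERM but no computable (exact) ERM. The class will embed the halting problem into the exact-minimization question, while providing enough computable near-minimizers so that a proper learner can stay within slack $\varepsilon_m\to 0$ of the optimum.

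Fix an enumeration $(T_e)_{e\in\mathbb{N}}$ of Turing machines with halting times $c_e\in\mathbb{N}\cup\{\infty\}$, and partition the domain via a pairing into columns $\{(e,n):n\in\mathbb{N}\}$. Within the $e$-th column I include a family of hypotheses $h_{e,k}$ indexed by $k\ge 1$, each with small finite support lying inside that column, and each admitted into $\mathcal{H}$ by a decidable membership rule of the form ``$T_e$ halts in at most $k$ steps,'' whose \emph{existence over varying $k$} is equivalent to halting of $T_e$. Alongside these I include auxiliary placeholder hypotheses (also decidably described) designed so that, on every sample of size $m$ concerning column $e$, some placeholder attains empirical error within $1/m$ of the column's optimum. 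Finite VC dimension will follow from the column structure (patterns that would require a single hypothesis to touch two columns are unrealizable) and from the chain-like organization of the $h_{e,k}$ inside a column.

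To show $\mathcal{H}$ has no computable ERM, I would isolate a family of ``halting-encoding'' samples on which the exact empirical minimum is attained \emph{only} by some $h_{e,k}$ with $k\ge c_e$; the existence of such a $h_{e,k}$ in $\mathcal{H}$ is then equivalent to $T_e$ halting, so a computable ERM would decide the halting set. By Theorem~\ref{thm:proper-SCPAC}, $\mathcal{H}$ is then not properly SCPAC learnable. For the computable asymptotic ERM, on input $S$ of size $m$ the algorithm simulates each $T_e$ referenced in $S$ for $m$ steps; if a halting is observed the corresponding exact hypothesis is output, otherwise it returns the column's placeholder. Since for every halting $e$ one eventually has $m\ge c_e$, and the placeholder's gap on column-$e$ samples is always at most $1/m$, the worst-case slack $\varepsilon_m$ tends to $0$ (non-computably), giving by Proposition~\ref{cpac_char} that $\mathcal{H}$ is properly CPAC learnable.

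The main obstacle is calibrating the placeholders: they must be strong enough to shrink the asymptotic gap to $O(1/m)$ \emph{uniformly} over all size-$m$ samples, yet weak enough that the optimum on the halting-encoding samples remains out of reach. The critical stress test is a sample of the form ``$(e,0)$ labelled $1$ repeated $m$ times,'' where the placeholder must hit $(e,0)$ (otherwise the gap is a constant and $\varepsilon_m\not\to 0$), while on the halting-encoding samples the placeholder must necessarily miss at least one point (otherwise the exact ERM becomes decidable). Finding supports and decidable membership conditions that reconcile these two requirements --- while preserving decidability and finite VC dimension --- is the technical crux of the construction, and is where the detailed verification is carried out.
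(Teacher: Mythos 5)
Your reduction is exactly the paper's: via Proposition~\ref{cpac_char} and Theorem~\ref{thm:proper-SCPAC} it suffices to build a decidable, finitely supported class of finite VC dimension that has a computable asymptotic ERM but no computable exact ERM, and your lower-bound mechanism (an exact empirical minimizer would decide a halting-type set) is also the one the paper uses. But the construction itself is left unfinished at precisely the point where, as you have set it up, it cannot be finished. You stipulate hypotheses $h_{e,k}$ with \emph{small} finite supports and claim some placeholder always comes within $1/m$ of the column optimum. With bounded-size supports this is false, and not merely a matter of calibration: take the sample of size $m$ listing each point of $\supp(h_{e,k})$ equally often, all labeled $1$, for a machine $T_e$ that halts only after far more than $m$ steps. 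Any placeholder missing a support point has error at least $1/|\supp(h_{e,k})|$, a constant independent of $m$; any placeholder covering all of $\supp(h_{e,k})$ ties $h_{e,k}$ on every sample drawn from that support, so the exact minimum there is already attained by a harmlessly-membered hypothesis and your undecidability argument for the ERM collapses. Adding a distinguishing $0$-labeled point outside the support only pushes the problem one step further and produces an infinite regress.

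The missing device is the paper's growing-block pigeonhole. There, the ``hard'' hypothesis $h_a$ is supported on a block $I_{f(a)}$ of size $f(a)$, growing with the index, plus one large odd identifier $2a+1$, and the near-minimizers form a sunflower $h_{f(a)1},\dots,h_{f(a)f(a)}$, each missing exactly one block element. On any size-$m$ sample in the dangerous regime ($a$ beyond the reach of the $m$-step search) the identifier does not occur at all, and by pigeonhole some block element occurs at most $m/f(a)$ times, so some near-minimizer is within $1/f(a)$ of optimal; since such $a$ exceed $m$ and $f$ is injective, $f(a)$ is forced to be large, which is exactly what makes $\varepsilon_m\to 0$ (non-computably). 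Meanwhile the all-ones sample on the block $I_k$ has empirical minimum $0$ iff $k\in f(\mathbb{N})$, giving the undecidability of an exact ERM. Without growing supports and a sunflower of near-minimizers (or an equivalent device), the two requirements you correctly identify as being in tension really are incompatible, so your outline does not yet yield a proof.
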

\begin{proof}
    Section \ref{sec:example}.
\end{proof}

Our results, together with previous works, fully determine the landscape of computable PAC learning (see Figure \ref{res}). First, by Theorem \ref{example},
we have that the set of properly SCPAC learnable classes is strictly included in the set of properly CPAC learnable classes. Next, due to the example of~\cite{agarwal2020learnability}, the set of properly CPAC learnable classes is strictly included in the set of CPAC learnable classes. On the other hand, CPAC learnable classes coincide with SCPAC learnable classes, as well as with classes that are subsets of proper CPAC (or SCPAC) learnable superclasses, by Theorem \ref{cpac_char}. Finally, by the construction of~\cite{sterkenburg2022characterizations}, the set of CPAC learnable classes is strictly included in the set of PAC learnable classes. We were able to strengthen this separation by constructing such an example with not only a finite VC dimension but even a finite Littlestone dimension. More precisely, we prove the following result.

\begin{figure}[h]

\begin{center}
	\begin{tikzpicture}
	\draw[rounded corners, fill=black, opacity=0.25] (2, 0) rectangle (8,1) {};
	\node at (5,.5) (a) {\large prop.~SCPAC};
	\draw[rounded corners, fill=black, opacity=0.2] (1, -.5) rectangle (9,2) {};
	\node at (4,1.5) (b) {\large prop.~CPAC};
	\draw[rounded corners, fill=black, opacity=0.15] (-1.5, -1) rectangle (10,3) {};
	\node at (4.5,2.5) (c) {\large CPAC\, = \, SCPAC\, =  $\subseteq$prop.~CPAC\, = $\subseteq$prop.~SCPAC};
	%\draw[rounded corners, fill=black, opacity=0.1] (-.5, -1.5) rectangle (11,4) {};
	%\node at (4.5,3.5) (d) {\large PAC};
    \draw[rounded corners, fill=black, opacity=0.05] (-2, -2) rectangle (12,4) {};
    \node at (6, 3.5) (e) {\large PAC};
    \end{tikzpicture}
 \caption{
The landscape of computable PAC learning. Note that the strict inclusions hold even in the case of decidable classes of hypotheses.}
\label{res}
\end{center}

\end{figure}
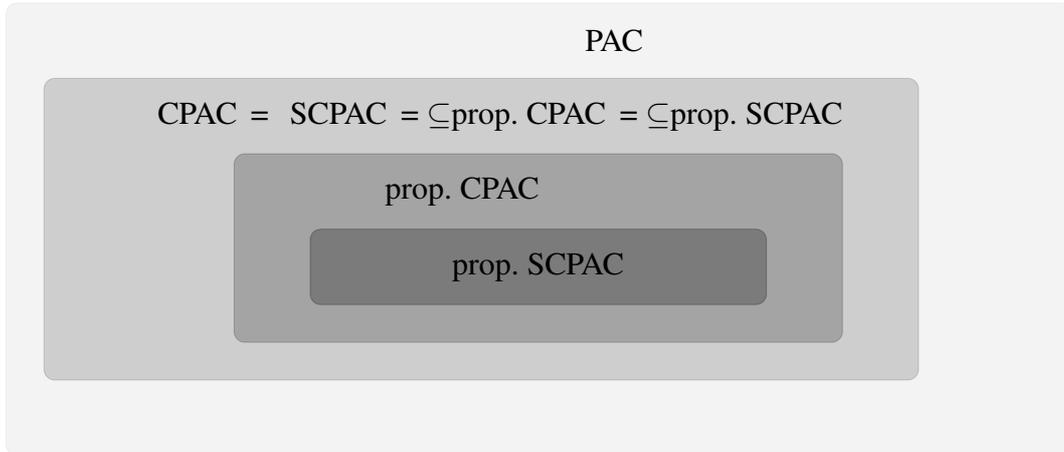

\begin{theorem} \label{thm:finiteLdim}
    There is a decidable class of finitely supported hypotheses $\mathcal{H}$ with $\mathrm{Ldim}(\mathcal{H}) = 1$ which is not (improperly) CPAC learnable.
\end{theorem}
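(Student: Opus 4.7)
The plan is to apply Theorem \ref{imp_char}: it suffices to exhibit a decidable class of finitely supported hypotheses $\mathcal{H}$ with $\mathrm{Ldim}(\mathcal{H})=1$ and infinite effective VC-dimension. I will take $\mathcal{H}=\{\mathbf{0}\}\cup\{h_{k,e}:(k,e)\in D\}$ in which the $h_{k,e}$ have pairwise disjoint finite supports, chosen by diagonalization to defeat every total computable candidate witness of VC-dimension. The fact that the non-zero supports are pairwise disjoint while $\mathbf{0}\in\mathcal{H}$ immediately gives $\mathrm{Ldim}(\mathcal{H})\le 1$: for every $x\in\mathbb{N}$ at most one $h\in\mathcal{H}$ satisfies $h(x)=1$, so in any depth-$2$ Littlestone tree with root labeled $x$ and right child labeled $y_1$, the leaves corresponding to patterns $(1,0)$ and $(1,1)$ cannot both be realized (the unique $h$ with $h(x)=1$, if any, would have to have $y_1$ both inside and outside its support). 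As long as some $h_{k,e}$ of nonempty support is ever included (which will be automatic), we get $\mathrm{Ldim}(\mathcal{H})=1$.

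For the diagonalization, I fix a standard enumeration $(w_e)_{e\in\mathbb{N}}$ of partial computable functions (encoding tuples and patterns into $\mathbb{N}$ via a fixed computable bijection) and effectively split $\mathbb{N}$ into pairwise disjoint infinite regions $R_{k,e}$ and ``tag regions'' $T_{k,e}$, indexed by $(k,e)\in\mathbb{N}^2$. Inside $R_{k,e}$, fix a $(k+1)$-tuple $\vec{x}^{k,e}=(x_1^{k,e}<\dots<x_{k+1}^{k,e})$, and fix a computable injection $g$ with $g(k,e,\tau)\in T_{k,e}$. Declare $h_{k,e}$ defined iff $w_e(\vec{x}^{k,e})$ halts with value $p\in\{0,1\}^{k+1}\setminus\{\mathbf{0}\}$; in that case, with $\tau^*$ the runtime, set $\supp(h_{k,e})=\{x_j^{k,e}:p_j=1\}\cup\{g(k,e,\tau^*)\}$. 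Since the tag $g(k,e,\tau^*)$ lies outside $R_{k,e}$, the restriction of $h_{k,e}$ to $\vec{x}^{k,e}$ is exactly $p=w_e(\vec{x}^{k,e})$. Hence if $w_e$ is a total computable function of arity $k+1$, then either $p=\mathbf{0}$ (and $\mathbf{0}\in\mathcal{H}$ matches $w_e$ at $\vec{x}^{k,e}$) or $p\neq\mathbf{0}$ (and $h_{k,e}$ does); either way $w_e$ is not a $k$-witness for $\mathcal{H}$. This makes the effective VC-dimension infinite, and Theorem \ref{imp_char} concludes.

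The main technical obstacle is ensuring that $\mathcal{H}$ is \emph{decidable} rather than merely recursively enumerable: \emph{a priori}, checking whether $h_{k,e}$ is defined is a halting question for $w_e(\vec{x}^{k,e})$. The purpose of encoding the runtime $\tau^*$ into the tag is precisely to turn this into a bounded check. Given a finite $T\subseteq\mathbb{N}$ as a candidate support, the decision procedure searches $T$ for an element in some tag region $T_{k,e}$: if none is found, it accepts iff $T=\emptyset$ (for $\mathbf{0}$); if more than one such element is found, it rejects; if exactly one is found, it decodes $(k,e,\tau^*)$, simulates $w_e(\vec{x}^{k,e})$ for $\tau^*$ steps, and accepts iff this computation first halts exactly at step $\tau^*$ with some value $p\neq\mathbf{0}$ satisfying $\{x_j^{k,e}:p_j=1\}=T\cap R_{k,e}$ and $T$ containing no further elements. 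Everything is bounded computation, so membership in $\mathcal{H}$ is decidable, and the uniqueness of $\tau^*$ guarantees that at most one $h_{k,e}$ is added per pair $(k,e)$, preserving the pairwise disjointness of supports used for the Littlestone-dimension bound.
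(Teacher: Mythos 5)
Your proposal is correct and follows essentially the same strategy as the paper's proof: diagonalize against all partial computable candidate witnesses paired with arities, plant a hypothesis realizing the forbidden pattern on a designated tuple whenever the candidate halts, encode the halting time into the support to make membership a bounded (hence decidable) check, and use pairwise disjointness of supports to bound the Littlestone dimension by 1. The only differences are cosmetic (you add the all-zero hypothesis to absorb the all-zero output pattern, whereas the paper simply includes $h_{es}$ for every halting computation).
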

\begin{proof}
    Section \ref{sec:online}.
\end{proof}
This theorem answers an open question from~\cite{hasrati2023}. It also establishes the separation between the classical online learnability (recently shown to be equivalent to private PAC by \cite{alon2022private}) and its computable counterpart,  even for decidable classes of hypotheses (see details in \cite{hasrati2023}, section 6). 

\section{Proof of Theorem \ref{imp_char}}
\label{sec:imp}
 \begin{proof} Implication $(b) \implies (a)$ has been already shown in \cite{sterkenburg2022characterizations} (Proposition \ref{k-wit} above), and $(c) \implies (b)$ follows directly from the definitions. It remains to establish that $(a)\implies (c)$.

Assume that the effective VC dimension of $\mathcal{H}$ is finite. Then for some $k\in\mathbb{N}$ there exists 
 a computable $k$-witness $w$ of VC dimension for $\mathcal{H}$. Let us say that a function $h\colon\mathbb{N}\to\{0,1\}$ is $\emph{good}$ if $h$ has finite support and for every $x_1< x_2 < \ldots <x_{k +1} < \max\supp(h)$ we have that $h$ disagrees with $w$ on $x_1\ldots, x_{k+1}$, i.e.,
    $\big(h(x_1), \ldots, h(x_{k+1})\big)\neq w((x_1, \ldots, x_{k +1}))$.
Let $\mathcal{H}_{good}$ denote the set of all good  $h\colon\mathbb{N}\to\{0,1\}$.
Define $\widehat{\mathcal{H}} = \mathcal{H}\cup \mathcal{H}_{good}$. Obviously, $\mathcal{H}\subseteq \widehat{\mathcal{H}}$.  It remains to show that $\widehat{\mathcal{H}}$ is SCPAC learnable. By Theorem \ref{thm:proper-SCPAC}, it is sufficient to show two things: that $\widehat{\mathcal{H}}$ has finite VC dimension and that $\widehat{\mathcal{H}}$ has computable ERM. 

We first show that the VC dimension of $\widehat{\mathcal{H}}$ is at most $k+1$. Indeed, take any $x_1 < x_2 < \ldots < x_{k +2}\in\mathbb{N}$. We claim that
\begin{equation}
\label{h_eq}
\big(h(x_1), \ldots, h(x_{k+1}), h(x_{k+2})\big) \neq  \big(w((x_1, \ldots, x_{k +1})), 1\big) \text{ for all } h\in\widehat{\mathcal{H}}.
    \end{equation}
Indeed, if $h\in \mathcal{H}$, then $\big(h(x_1), \ldots, h(x_{k+1})\big) \neq w((x_1, \ldots, x_{k +1}))$ because $w$ is a $k$-witness of VC dimension for $\mathcal{H}$, and hence \eqref{h_eq} holds as well. Now, assume that  $h\in \mathcal{H}_{good}$. If $h(x_{k+2}) = 0$, then \eqref{h_eq} holds. In turn, if $h(x_{k+2}) = 1$, observe that $x_1 < x_2 < \ldots < x_{k+1} < x_{k+2} \le \max\supp(h)$. Hence, by definition of a good hypothesis, we  have $\big(h(x_1), \ldots, h(x_{k+1})\big) \neq w((x_1, \ldots, x_{k +1}))$, and this implies \eqref{h_eq}.

It remains to show that $\widehat{\mathcal{H}}$ has a computable ERM. The key is to note that for any given sample $S = ((x_1, y_1), \ldots, (x_m, y_m))$ with $M = \max\{x_1, \ldots, x_m\}$, in order to find a hypothesis with minimal error on $S$, it is enough to go through all good hypotheses whose support is a subset of $\{0, 1, \ldots, M\}$. There are finitely many of them, and we can effectively list them because $w$ is computable. It remains to show that the resulting learner is an ERM for $\mathcal{H}$. For that, it is enough to establish that for any $h\in\widehat{\mathcal{H}}$ there exists a good $h_1$ with $\supp(h_1) \subseteq \{0, 1,\ldots, M\}$ such that $L_S(h) = L_S(h_1)$. Indeed, consider $h_1$ that coincides with $h$ on $\{0,1,\dots,M\}$ and equals $0$ otherwise (so that $\supp(h_1)\subseteq \{0, \ldots, M\}$).  Clearly, $L_S(h) = L_S(h_1)$ because $S$ only involves numbers up to $M$. It remains to show that $h_1$ is good, i.e., that it disagrees with $w$ on every tuple $x_1 < x_2 < \ldots < x_{k+1}$ with $x_{k+1} < \max\supp(h_1)$. Indeed, note that $\max\supp(h_1)\le M$. Hence, $h$ agrees with $h_1$ on  $x_1 < x_2 < \ldots < x_{k+1}$ and on $x_{k+2} = \max\supp(h_1)$ (which means that $h(x_{k+2}) = 1$). Thus, $h_1$ must disagree with $w$ on $x_1 < x_2 < \ldots < x_{k+1}$ because otherwise
we have $(h(x_1), \ldots, h(x_{k+1})) = w((x_1, \ldots, x_{k+1}))$ and $h(x_{k+2}) = 1$, which
contradicts \eqref{h_eq} for $h$.

\end{proof}

\section{Proof of Proposition \ref{cpac_char}}
\label{sec:asy}
\begin{proof}
    First, assume that the VC dimension of $\mathcal{H}$ is finite and that it admits a computable asymptotic ERM $A$. We show that $A$ PAC learns $\mathcal{H}$. Since $A$ is computable and only outputs functions from $\mathcal{H}$ by definition, this means that $\mathcal{H}$ is properly CPAC learnable.  Since the VC dimension of $\mathcal{H}$ is finite, Theorem \ref{thm:fundamental} ensures that it has the uniform convergence property. Let then $m_n$ be such that for every $m\ge m_n$ and every probability distribution $D$, it holds that
    \[
    \Pr_{S\sim D^{m}} \left[ \forall h \in \mathcal{H}, \ |L_D(h) - L_S(h)| \le \frac{1}{3n} \right] \ge 1 - 1/n.
    \]
    Since $A$ only outputs hypothesis from $\mathcal{H}$, we have that with probability at least $1-1/n$, it holds that $|L_D(A(S)) - L_S(A(S))| \le \frac{1}{3n}$, as well as $|L_D(h^*) - L_S(h^*)| \le \frac{1}{3n}$ for any $h^*\in\arg\min_{h \in \mathcal{H}} L_S(h)$. 
     Furthermore, by definition of asymptotic ERM, there exists a sample size $m_n'$ such that for any sample of size at least $m_n'$ we have 
    \[
    L_S(A(S)) \le L_S(h^*) + \frac{1}{3n}.
    \]
    It follows that for any sample $S$ of size $m\ge\max(m^1_n, m^2_n)$,
    \[
    \Pr_{S\sim D^m}\left[L_D(A(S)) \le \inf_{h\in\mathcal{H}} L_D(h) + \frac{1}{n}\right]\ge 1 - 1/n
    \]
    holds, which shows that $A$ is a PAC learner for  $\mathcal{H}$.

    Now assume that $\mathcal{H}$ is CPAC learnable and let $A$ be a computable learner. We construct a computable asymptotic ERM $\widehat{A}$. For any sample $S$, consider the probability distribution $D(S)$ which assigns probability $\frac{1}{|S|}$ to any element $(x,y) \in S$ (in case $S$ has repetitions, we simply sum the corresponding probabilities). Note that with this definition for $D(S)$ we have that 
      \begin{equation}\label{sampled}
    L_S(h) = L_{D(S)}(h) \quad \text{ for all } h\in \mathcal{H}. 
     \end{equation}
    Let us denote by $\widehat{\mathcal{S}}$ the set of all samples $S'$ of size $m=|S|$ that can be drawn from $D(S)$. We then define the output of $\widehat{A}$ on $S$ by 
   \[
   \widehat{A}(S) = \arg\min_{\{A(S')\,:\,S'\in\widehat{S}\}} L_S(A(S')).
   \]

    It is clear that $\widehat{A}$ is computable since $A$ and $D(S)$ are. 
   To show that $\widehat{A}$ is an asymptotic ERM, let $m(n)$ be the sample complexity of $A$ w.r.t.~$\mathcal{H}$. Then, for every $n$ and every $m$ such that $m(n) \le m < m(n+1)$, let $\epsilon_m = \frac{1}{n}$. Since $A$ is a PAC learner for $\mathcal{H}$, we have that for such an $m$ and $S'\sim D(S)^{m}$, with probability at least $1-1/n$ it holds that
    \[
    L_{D(S)}(A(S'))\leq \inf_{h\in\mathcal{H}} L_{D(S)}(h) + \epsilon_{|S|}.
    \]
    This means that there exists some $S^*\in \widehat{S}$ for which this holds, which together with equality (\ref{sampled}) above gives $L_S(\widehat{A}(S))
    \leq \inf_h L_S(h) + \epsilon_{|S|}$, as it was to be shown.
    \end{proof}

\section{Proof of Theorem \ref{example}}
\label{sec:example}
\begin{proof} 
We start by defining the class $\mathcal{H}$. First, we partition all even numbers in blocks of increasing sizes: $I_1=\{2\}$, $I_2=\{4,6\}$, $I_3=\{8,10,12\}$ and so on so that the size of $I_k=\{n_{k1},\ldots,n_{kj},\ldots n_{kk}$\} is $k$. Then, for every $k\ge 1$ and $1\le j\le k$ we let 
\[
h_{kj}(n) = \begin{cases} 1 & \text{ if } n \in I_k\setminus\{n_{kj}\}\\
                    0 & \text{ otherwise }
                    \end{cases}
\] 
 and put it into $\mathcal{H}$. Then we consider a  total injective computable function $f\colon\mathbb{N}\to\mathbb{N}$ such that $f(\mathbb{N})$ is undecidable. For example, one can take any enumerable but not recursive set $S\subseteq \mathbb{N}$ and let $f(a)$ be the $a$th natural number in a  computable enumeration of $S$ (without repetitions to ensure that $f$ is injective). Now, for every $a\in\mathbb{N}$ we define a hypothesis $h_a$  which is  equal to 1 on $I_{f(a)}\cup\{2a+1\}$ and nowhere else. We put all hypotheses $h_a$ into $\mathcal{H}$ as well. This finishes the description of $\mathcal{H}$, which consists only of functions with finite support. 
 
 We now show that $\mathcal{H}$ is decidable. Let  $h\colon\mathbb{N}\to\mathbb{N}$ be any function with finite support.
We first check whether the support of $\mathcal{H}$ intersects $I_k$ for exactly one $k$ (otherwise $h\notin\mathcal{H}$). If it does, there are two possibilities for $h$ to be in $\mathcal{H}$. The first possibility is when $\supp(h)$ is a subset of $I_k$ of size $k - 1$. In this case $h = h_{kj}$ for some $j$ and thus $h\in\mathcal{H}$. The only other possibility is when $\supp(h) = I_k \cup\{2a + 1\}$ for some $a$. Then $h\in \mathcal{H}$ if and only if $k = f(a)$, and we check this by  computing $f(a)$.

We now show that $\mathcal{H}$ is properly CPAC learnable. By Proposition \ref{cpac_char}, it is enough to show two things: that $\mathcal{H}$ has finite VC dimension and that $\mathcal{H}$ has a computable asymptotic ERM. We now show that the VC dimension of $\mathcal{H}$ is at most 3. For that, we take any 4 distinct natural numbers $x_1, x_2, x_3, x_4$ and show that not all $2^4$ Boolean functions on $S = \{x_1, x_2, x_3, x_4\}$ can be realized by hypotheses from $\mathcal{H}$. First, observe that the support of every hypothesis from $\mathcal{H}$ intersects exactly one block $I_k$. Hence, if $S$ intersects two different blocks, we cannot realize the all-ones function on $S$. Likewise, we cannot realize the all-ones function if $S$ has two distinct odd numbers (every hypothesis from $S$ has at most one odd number in its support). The only case left is when $S$ intersects exactly one block $I_k$ and, besides that, possibly has exactly one odd number. Then at least 3 elements of $S$ are from $I_k$. Observe that no function from $\mathcal{H}$ can be equal to 0 on two of these elements and to 1 on the third one.

We now construct a computable asymptotic ERM $A$ for $\mathcal{H}$. Assume that $A$ receives on input a sample $S = ((x_1, y_1), \ldots, (x_m, y_m))$
of size $m$. Then $A$ works as follows. First, it constructs a finite set of hypotheses $H_S\subseteq \mathcal{H}$. Then it goes through all hypotheses of $H_S$ and outputs one which minimizes $L_S(h)$ among them. We will argue that 
\begin{equation}
\label{approx_s}
\min_{h\in 
H_S} L_S(h) \le \min_{h\in\mathcal{H}} L_S(h) + \varepsilon_m, \qquad \varepsilon_m = \frac{1}{\min \big[f(\mathbb{N}) \setminus f(\{1, \ldots, m\})\big]}.
\end{equation}
As the sequence $\{\varepsilon_m)\}_{m = 1}^\infty$ converges to $0$ as $m\to\infty$, this will prove that $A$ is an asymptotic ERM for $\mathcal{H}$. We now explain how $A$ constructs $H_S$. Let $M = \max\{x_1,\ldots, x_m\}$. First, $A$ puts into $H_S$ one of the hypotheses $h\in H$ that equals $0$ on $\{1, \ldots, M\}$ (say $h = h_{M1}$ for instance). Then $A$ puts into $H_S$ all hypotheses of the form $h_{kj}$ such that $I_k$ intersects $\{1, \ldots, M\}$. Note that there are finitely many of them because $I_k$ is disjoint from $\{1, \ldots, M\}$ for $k\ge M$. Finally, $A$ puts $h_1, \ldots, h_{\max\{m, M\}}$ into $H_S$. To- compute these hypotheses, $A$ computes $f(1), \ldots,f(\max\{m, M\})$.

To show that $H_S$ satisfies \eqref{approx_s}, we take any $h\in H$ and show that there exists $h^\prime\in H_S$ with $L_S(h^\prime) \le L_S(h) + \varepsilon_m$. Assume first that $h = h_{kj}$ for some $k,j$. Note that there are only finitely many hypotheses of this form whose support intersects $\{1, \ldots, M\}$, and all of them are in $H_S$, so we can pick $h^\prime = h $ in this case. If it happens that $\supp(h_{kj})$ is disjoint from $\{1,\ldots,M\}$, then we can simply take $h^\prime=h_{M1}$ which coincides with $h_{kj}$ on $\{1,\ldots,M\}$.

Next, assume that $h = h_a$ for some $a$. If $a\le \max\{m, M\}$, then $h$ is also in $H_S$, so there is nothing left to do in this case. Likewise, we are done if $h_a$ equals $0$ on $\{1, \ldots, M\}$. The only remaining case is when $a> \max\{m, M\}$ and $\supp(h_a)$ intersects $\{1, \ldots, M\}$. Recall that $\supp(h_a) = I_{f(a)}\cup \{2a + 1\}$. Since $a > M$, we see that $I_{f(a)}$ must intersect $\{1, \ldots, M\}$ in this case. Hence, $H_S$ contains all hypotheses of the form $h_{f(a)j}$. Note that any of these $h_{f(a)j}$ differs from $h_a$ exactly at two points: $2a+1$ and the $j$th element of $I_{f(a)}$. Hence, the difference between $L_S(h_a)$ and $L_S(h_{f(a)j})$ can be bounded by the number of times these two points appear in the sample (divided by $m$, the size of the sample). Since $a > M$, the point $2a+1$ actually does not appear in the sample, and there exists $j$ such that the $j$th element of $I_{f(a)}$ appears in $S$ at most $m/|I_{f(a)}| = m/f(a)$ times. Hence, $L_S(h_a) \le L_S(h_{f(a)j}) + \frac{1}{f(a)}$ for some $j$. Finally, since $a > m$ and $f$ is injective, we have that $f(a) \in f(\mathbb{N}) \setminus f(\{1, \ldots, m\})$ and hence
\[
\frac{1}{f(a)} \le \frac{1}{\min \big[f(\mathbb{N}) \setminus f(\{1, \ldots, m\})\big]}  = \varepsilon_m.
\]

It remains to show that $\mathcal{H}$ is not properly SCPAC learnable. Equivalently, by Theorem \ref{thm:proper-SCPAC}, we have to show that $\mathcal{H}$ does not have a computable ERM. Assume for contradiction that it does, and call it $A$. Now let $k\in\mathbb{N}$ be any natural number and take 
 a sample $S$ of size $k$, which has each element of $I_k$ exactly once, all labeled by $1$. The only possible hypothesis in $\mathcal{H}$ which has $0$-error on $S$ (equivalently, is equal to 1 everywhere on $I_k$) is $h_a$ for $a$ with $f(a) = k$. Thus, such a hypothesis exists if and only if $k\in f(\mathbb{N})$, which can then be decided by running $A$ on $S$, and checking whether the output hypothesis has $0$-error on $S$, a contradiction. 
\end{proof}

\section{Proof of Theorem \ref{thm:finiteLdim}} \label{finiteLdim}
\label{sec:online}
\begin{proof}
We start by observing that if all the hypotheses in a class $\mathcal{H}$ have pairwise disjoint supports, then the Littlestone dimension of this class is at most 1. Indeed, consider any tree of depth 2. We show that $\mathcal{H}$ does not shatter it. Assume that its root is labeled by $x\in\mathbb{N}$. Note that there is at most one hypothesis $h\in\mathcal{H}$ with $h(x) = 1$. Hence, only one leaf under the right child of $x$ can be reached via a hypothesis from $\mathcal{H}$. This means that $\mathcal{H}$ does not shatter this tree.

By Proposition \ref{k-wit}, every CPAC learnable class admits a computable $k$-witness of VC dimension, for some $k\in\mathbb{N}$. We will construct a class that is not CPAC learnable by diagonalizing against all possible computable $k$-witnesses, for all $k$. Since we cannot computably  enumerate only potential witnesses, we diagonalize instead against all partial computable functions. The goal of our construction is to guarantee that for every potential computable $k$-witness $w$, there is a tuple of natural numbers and a hypothesis from $\mathcal{H}$ which agrees with $w$ on this tuple.

Let $((M_e, k_e))_{e \in \mathbb{N}}$ be a computable enumeration of all pairs of the form $(M, k)$, where $M$ is a Turing machine and $k$ is a natural number. Partition the set of even numbers into consecutive blocks $(I_e)_{e\in\mathbb{N}}$, where  the size of $I_e$ is $k_e$. Let $I_e(i)$ denote the $i$th smallest element of $I_e$. Also fix a computable bijection $p\colon\mathbb{N}^2\to\mathbb{N}$. 

We now define the class of hypotheses $\mathcal{H}$. For every $e, s\in\mathbb{N}$ such that $M_e$ halts on (a code for) the tuple $(I_e(1), \ldots, I_e(k_e))$ in exactly $s$ steps and outputs a binary  word $x = x_1\ldots x_{k_e}$ of length $k_e$, we define $h_{es}:\mathbb{N}\to\{0,1\}$ by
\[
h_{es}(n) = \begin{cases} 1 & \text{ if } n=2p(e,s)+1\\
x_i & \text{ if } n= I_e(i)\\   
0 & \text{ otherwise }
                    \end{cases}
\] 
and let $\mathcal{H}$ be the collection of all such hypothesis. Notice that every $h$ in $\mathcal{H}$ has finite support. We claim that, moreover, any two hypotheses from $\mathcal{H}$ have disjoint supports. Indeed, for any $e$, there is at most one $s$ such that $h_{es}$ is in our class. Hence, for any block $I_e$ of even numbers, there is at most one hypothesis whose support intersects this block. Thus, two distinct hypotheses cannot have a common even number in their support. Likewise, each odd number can be in the support of at most one $h_{es}$, because $p$ is a bijection. Hence, $\mathcal{H}$ has Littlestone dimension 1. Moreover, by similar considerations as in the proof of Theorem \ref{separation}, the decidability of whether a given finitely supported $h$ belongs to $\mathcal{H}$ boils down to a finite computation, in this case, the simulation of the first $s$ steps of computation of $M_e$ on $(I_e(1), \ldots, I_e(k_e))$.  

To complete the proof, suppose for contradiction that $\mathcal{H}$ is CPAC learnable and $w$ is a computable $k$-witness of  VC dimension for $\mathcal{H}$. There exists $e$ such that $M_e$ realizes $w$ and $k_e = k+1$. In particular, there exists $s$ such that $M_e$ halts on $(I_e(1), \ldots, I_e(k_e))$ in exactly $s$ steps and outputs $x = w((I_e(1), \ldots, I_e(k_e))$. Observe that $h_{es}$ belongs to $\mathcal{H}$ but agrees with $w$ on $(I_e(1), \ldots, I_e(k_e))$, a contradiction.

\end{proof}

% \subsection*{Acknowledgement}
% This project has received funding from the European Union’s Horizon 2020 research and innovation programme under the Marie Skłodowska-Curie grant agreement No 731143. 

%\acks{We thank a bunch of people and funding agency.}
\bibliographystyle{apalike}
\bibliography{main.bib}

\end{document}